\newcommand{\Nat}{{\mathbb N}}
\newcommand{\Real}{{\mathbb R}}
\newcommand{\st}{\;|\;}
\newcommand{\set}[1][ ]{\{ #1 \}}
\newtheorem{theorem}{Theorem}
\newtheorem{example}{Example}
\begin{document}

\title{On the Mediation of Program Allocation in High-Demand Environments}

\author{Fabiano de S. Oliveira$^1$\thanks{Corresponding author (fabiano.oliveira@ime.uerj.br).}\\
Valmir C. Barbosa$^2$\\
\\
$^1$Instituto de Matem\'atica e Estat\'\i stica\\
Universidade do Estado do Rio de Janeiro\\
Rua S\~ao Francisco Xavier, 524, sala 6019B\\
20550-900 Rio de Janeiro - RJ, Brazil\\
\\
$^2$Programa de Engenharia de Sistemas e Computa\c c\~ao, COPPE\\
Universidade Federal do Rio de Janeiro\\
Caixa Postal 68511\\
21941-972 Rio de Janeiro - RJ, Brazil}

\date{}

\maketitle

\begin{abstract}
In this paper we challenge the widely accepted premise that, in order to carry out a distributed computation, say on the cloud, users have to inform, along with all the inputs that the algorithm in use requires, the number of processors to be used. We discuss the complicated nature of deciding the value of such parameter, should it be chosen optimally, and propose the alternative scenario in which this choice is passed on to the server side for automatic determination.  We show that the allocation problem arising from this alternative is NP-hard only weakly, being therefore solvable in pseudo-polynomial time. In our proposal, one key component on which the automatic determination of the number of processors is based is the cost model. The one we use, which is being increasingly adopted in the wake of the cloud-computing movement, posits that each single execution of a program is to be subject to current circumstances on both user and server side, and as such be priced independently of all others. Running through our proposal is thus a critique of the established common sense that sizing a set of processors to handle a submission to some provider is entirely up to the user. 

\bigskip
\noindent
\textbf{Keywords:} Cloud computing, Distributed computing, Program allocation.
\end{abstract}

\newpage
\section{Introduction}
\label{intro}

    For about one decade now, the proliferation of multicore architectures has transformed the problem of job scheduling from that of manipulating a single queue with priorities (in order to let processes alternate in using the single available processor) into problems of considerably more elaborate formulations~\cite{Zhuravlev12}. Not only has the number of processing units on a single machine increased, but the number of machines made available as computing clusters has undergone a major increase as well. The adoption of cloud computing worldwide in recent years is leading data centers to become ever larger in order to meet the processing demands of a growing number of companies, sharing resources with them as well as costs. A direct consequence of this is that the processing power now available for use by any given program was unthinkable only a few decades ago, something that no individual organization could have access to exclusively. When used in some carefully limited way, say for a small number of time slots, tapping such processing power is now routinely affordable.
    
    The cost of each execution on such a data center typically depends on the number of processors used and on the total running time. A clear trade-off then arises: while on the one hand it is generally preferable to use more processors, since this tends to lead to shorter running times, on the other hand a greater cost is incurred as the number of processors in increased. Thus, deciding whether increasing the number of processors is worth depends on the running time of the algorithm in question given an input, on how the infrastructure costs vary with the number of processors and running time, and on how much users are willing to spend on their executions. 
    
    We consider a computational environment to which new jobs are submitted for execution from time to time and where they are served by processing power drawn from an infrastructure that is common to all jobs. By ``job'' we mean an algorithm to be run on a given input. In cloud infrastructures, where costs are typically associated with the running times of jobs, such costs should never be affected by the concurrent execution of other jobs, lest the cost model itself fall apart. Here we assume that this noninterference is achieved by a resource-allocation policy that guarantees exclusive access to all resources assigned to any given job. In particular, once a processor is committed to a job, it remains dedicated to that job until the job's completion. We also assume that, with the exception of the processors themselves, all storage-related resources are sufficiently abundant to meet the needs of all jobs. It then follows that, if in a high-demand scenario certain jobs cannot initiate their executions immediately upon being submitted, then this is due exclusively to a temporary lack of available processors. For what follows in this paper, it is immaterial whether the total capacity of such an infrastructure (its number of processors) is time-invariant or -varying. The problem of interest is the following: given a set of jobs to be run, how can the number of processors to be granted each job be decided by an automated procedure on the server side? Let us first formalize this question properly. 
    
    Let $\mathcal{J}$ be a set of jobs concurrently submitted for execution on the processing infrastructure. A \emph{job} $(A, I) \in \mathcal{J}$ consists of an algorithm $A$ and a particular input $I$ for $A$ to run on. For $J=(A, I) \in \mathcal{J}$, let $T_J(N)$ be a function $T_J: \Nat^* \rightarrow \Real^*_+$ giving the overall running time of $A$ on $I$ when the number of processors for the exclusive use of $A$ is $N$ (we therefore assume that such running time is independent of the particular time at which the submission of $J$ takes place). It is worth mentioning that, for $N = 1$,  the asymptotic value of $T_{(A, I)}(N)$ with respect to the size of $I$, once stripped of any constant factor, is the running time used in standard computational-complexity analyses of $A$. Likewise, letting $N$ vary forms the basis of the standard speedup analyses of parallel algorithms. 
    
    The function $T_{(A, I)}(N)$ will play such an important role in the process of automating the selection of the number of processors for each job that we pause momentarily to comment on the issue of how feasible it is to obtain $T_{(A, I)}(N)$ in the first place. This is a legitimate concern that can arise in many cases, since predicting the running-time function of an algorithm given an input to it is at present essentially a research problem. What prompts us to proceed in spite of such incipience is the fact that there have already been significant advances in the field (cf., e.g.,~\cite{Goldsmith07, Hutter14}, where the authors survey and present their own technique). We feel like such advances do indeed justify our assumption that an algorithm's running-time function, or a reasonable approximation thereof, is known a priori.
    
	Still with regard to the running-time function $T_{(A, I)}(N)$, we introduce the notion of the \emph{saturation point} of job $J=(A,I)$. This point is the value of $N$, the number of processors, beyond which the running time of algorithm $A$ on input $I$ becomes dominated by the communication and/or coordination overhead among the processors, so that increasing $N$ no longer leads to a decrease in $T_{(A, I)}(N)$. In other words, letting $S_J$ denote the saturation point of $J$, we have $T_J(N) > T_J(N+1)$ for $1 \leq N < S_J$ and $T_J(N) \leq T_J(N+1)$ for $N \geq S_J$. We set $S_J = \infty$ whenever $J$ remains unaffected by the aforementioned overhead indefinitely as $N$ grows.
	
    The next major player in our modeling effort is inspired by the recent surge of interest in that form of computing that has become known by the charming name of cloud computing. Regardless of whether any significantly new technology lies behind the name, it remains an undisputed fact that the cloud computing movement has given rise to a new way to approach the handling of an installation's processing power, viz., the pay-as-you-go concept. In this approach, users are charged according to a cost associated with each of their jobs, which in turn depends basically on the actual number of processors involved in executing the job in question and on how long they require to complete it (the job's total running time). 
    
    This approach to cost assessment is reflected in our model via $\mathrm{CoI}(T, N)$, a function that yields the cost of the infrastructure for a job given the time $T$ it takes to run on $N$ processors. That is, $\mathrm{CoI}: \Real^*_+ \times \Nat^* \rightarrow \Real_+$. This function is specified by the infrastructure provider and may vary with time in order to reflect, for instance, the evolving relationship between the demand and the cost of maintaining the physical infrastructure. The $\mathrm{CoI}(T, N)$ function is therefore the provider's ``pricing table,'' being in principle independent of any particular job. We assume that $\mathrm{CoI}(T, N)$ is any function for which the following, called the \emph{domination property}, holds:
\begin{eqnarray*}
	 \text{if } T_1 \leq T_2 \text{ and } N_1 \leq N_2\text{, then }  \mathrm{CoI}(T_1, N_1) \leq \mathrm{CoI}(T_2, N_2).
\end{eqnarray*}

	With the definitions of both $T_J(N)$ and $\mathrm{CoI}(T, N)$ in place, we express the cost of job $J \in \mathcal{J}$ in terms of evaluating $\mathrm{CoI}(T, N)$ on the set of points $\set [(T_J(N), N) \st N \in \Nat^*]$. That is, the cost of executing $J$ on any number of processors $N$ is given by the function $C_J: \Nat^* \rightarrow \Real_+$ such that 
\begin{eqnarray*}
	 C_J(N) = \mathrm{CoI}(T_J(N), N) \text{.}
\end{eqnarray*}

	We proceed, in Section~\ref{deciding}, to a discussion of the mathematics behind the optimal choice of the number of processors for a given job. Such discussion makes the difficulty involved in the determination of the optimal $N$ evident, leading one to question the status quo, which is to relegate to the user the task of coming up with the desired value. We propose instead that a user should only be required to express, in quantitative terms, some measure of how much the job in question is worth. Based on this, the optimal number of processors could be automatically optimized for the benefit of both users and infrastructure providers. We discuss the computational hardness of such optimization in Section~\ref{hardness} and conclude in Section~\ref{conclusion}.

    As a final introductory note, we find it worth mentioning that the allocation problem we consider is fundamentally different from the well-known and prolifically studied scheduling problem, on which comprehensive surveys date back to the 1970s (cf., e.g.,~\cite{Graham79, Graham78}). In the scheduling problem, a set of jobs have to be executed and the goal is either to minimize the time to complete all jobs or, by assigning bonuses and penalties to each job's starting/finishing times, to maximize the bonuses minus the penalties. Many efficient algorithms as well as hardness results are known for particular subproblems~\cite{Rodrigues14}, all invariably avoiding the difficulties associated with a variable number of processors given a job and the ensuing variability in running times. A few recent works have considered some degree of variability in the problem's goals and a job's completion time (cf., e.g.,~\cite{Kuo08, Nouri12, Choi13, Ma16}), but the latter has had to do with problem-specific issues only, not with the allocation of the job to multiple processors that we consider.
	
\section{The problem of deciding the number of processors}
\label{deciding}

	Let us initially consider the hypothetical scenario in which the cost $C_J(N)$ of a job, defined in Section~\ref{intro}, is not taken into account when deciding the value of $N$ that is most appropriate. In such a scenario the infrastructure is assumed to exist for the users' sole benefit without any regard to any economic aspect related to executing jobs on it. Such jobs incur no cost whatsoever ($C_J(N)=0$ for any $J$ or $N$), regardless of how many processors they use or for how long. This being the case, the objective function for users to minimize is quite clear: simply request as many processors as needed to minimize each job's running time, even if overall the jobs end up demanding more processing capacity from the infrastructure than can possibly be provided. Not only is this problematic per se, but the situation tends to get more and more aggravated by the fact that using increasingly more processors for job $J$, even if below the job's saturation point $S_J$, tends to be progressively less rewarding.

	In the realistic scenario of $C_J(N)>0$ for every $J$ and every $N$, a user's goal becomes, if anything, even more fraught with difficulties. If, on the one hand, it is desirable for the user to minimize the cost of executing a job, possibly leading to unfeasibly long running times, on the other hand aiming to minimize the job's running time may lead to unfeasibly high costs. Reasonable strategies will therefore admit higher-than-optimal running times in order to strike a balance in which both the cost of a job and its running time are feasible. Informally, the goal of such a strategy can be expressed in economical terms as follows: given a job, a metric of how much running it to completion is worth from the user's perspective, and the current cost of the infrastructure as given by the $\mathrm{CoI}$ function of Section~\ref{intro}, how many processors are currently optimal for the job to run on?
	
    Let us formalize this question, which asks for a decision on the number of processors to be used to run a job while minding the trade-off between the job's cost and its running time. First of all, clearly using a number of processors that surpasses the job's saturation point is never worth. For a number of processors equal to or below the saturation point, we assume that the user provides, along with each particular job $J$ to be executed, a \emph{utility function} $U_J: \Real_+ \rightarrow \Real_+$ that reflects the trade-off in the following way. \emph{For $a, b \in \Real_+$ such that $a < b \leq T_J(S_J)$, running time $a$ is preferable to running time $b$ if and only if the additional cost incurred is at most $\int_a^b U_J(t)\;\mathrm{d}t$}.
	
	In other words, in order to run job $J \in \mathcal{J}$, and for $N\in\set[ 2, \ldots, S_J ]$, using $N$ processors is preferable to using $N-1$ processors if and only if
\begin{eqnarray*}
	 C_J(N) - C_J(N-1) \leq \int_{T_J(N)}^{T_J(N-1)} U_J(t)\;\mathrm{d}t{.}
\end{eqnarray*}
It readily follows from this that, more generally and for $1 \leq N_1 < N_2 \leq S_J$, running $J$ on $N_2$ processors is preferable to running it on $N_1$ processors if and only if
\begin{eqnarray*}
	  C_J(N_2) - C_J(N_1) \leq \int_{T_J(N_2)}^{T_J(N_1)} U_J(t)\;\mathrm{d}t{.}
\end{eqnarray*}
The role of the utility function is therefore to make explicit how valuable a faster execution is to the user. And while both $T_J(N)$ and $C_J(N)$ can be expected to vary only negligibly among jobs that share the same underlying algorithm $A$ and input $I$, the utility function $U_J(t)$ is inherently dependent on current circumstances and, as such, is capable of reflecting the user's predisposition to wait for the job's output.    
	
	Once a measure of the value attributed to jobs by their users is available, the process of deciding the number of processors on which to execute each new job is, in principle, amenable to being automated. To see this, let $N_J(N_\mathrm{max})$ be the optimal number of processors for an execution of job $J$ when $N_\mathrm{max}$ is the current maximum number processors available for job $J$, as informed by the infrastructure provider. Clearly, $N_J(N_\mathrm{max})$ can be computed as
\begin{eqnarray*}
	N_J(N_\mathrm{max}) &=& \left \{ 
    \begin{array}{ll} 
	0 & \text{, if } N_\mathrm{max} = 0; \\
	N_\mathrm{max} & \text{, if } C_J(N_\mathrm{max}) - C_J(N^*) \leq \int_{T_J(N_\mathrm{max})}^{T_J(N^*)}  U_J(t)\;\mathrm{d}t; \\
	N^* & \text{, otherwise, }
	\end{array} 
    \right .
\end{eqnarray*}
where $N^* = N_J(N_\mathrm{max}-1)$, that is, $N^*$ is the number of processors that would be the optimal to execute $J$ if the number of available processors fell short of $N_\mathrm{max}$ by $1$. Expressing $N_J(N_\mathrm{max})$ through the above recursion immediately implies that it can be computed by dynamic programming involving $\Theta(N_\mathrm{max})$ calculations of the cost function $C_J$ and of the definite integral of $U_J(t)$.

	We now set out to examine a series of examples that illustrate this use of the function $U_J(t)$ to optimally select the number of processors on which to execute $J$. Continuing in this way requires the following additional notation. We use $\mathcal{N}^*_J$ to denote the set of all possible values of $N_J(N_\mathrm{max})$ as $N_\mathrm{max}$ is varied onward from $0$. That is, $\mathcal{N}^*_J = \set [ N_J(N_\mathrm{max}) \st N_\mathrm{max} \in \Nat ]$.

\begin{example}
\label{Ex1} 
Let $J_1$ be a job whose total workload can always be evenly distributed among any number $N$ of processors. Assume $\mathrm{CoI}(T, N) = kTN$ for some $k \in \Real^*_+$ and that the time to distribute the job's workload to the $N$ processors is negligible.
\end{example}

	In this case we clearly have $T_{J_1}(N) = T_{J_1}(1) / N$, and therefore $C_{J_1}(N)=\mathrm{CoI}(T_{J_1}(N), N)=k N T_{J_1}(1) / N=k T_{J_1}(1)$. That is, the cost of job $J_1$ is invariant with respect to the number of processors used. Consequently, $\mathcal{N}^*_{J_1} = \set [N \in \Nat \st N \leq S_{J_1}]$, with $S_{J_1}=\infty$, regardless of $U_{J_1}(t)$.

	The running-time function used in Example~\ref{Ex1} yields $T_J(1)/T_J(N) = N$, which is the highest speedup that any job $J$ can achieve when executing on $N$ processors. This is fully consistent with the assumption that the total workload of job $J_1$ can always be evenly divided among the processors in use without requiring any additional time (i.e., without any communication or coordination overhead). A generalization of this running-time function yielding a speedup of at most $N^\alpha$ for $0 < \alpha \leq 1$ is the function $T_J(N) = (T_J(1) - T_J(\infty))/N^{\alpha} + T_J(\infty)$, where $T_J(\infty)=\lim_{N \to \infty} T_J(N)$. Note, with regard to this function, that requiring $\alpha \leq 1$ is necessary to prevent speedups beyond $N$ when $T_J(\infty) = 0$. This function is used in the next example, but forbidding the co-occurrence of $\alpha=1$ and $T_J(\infty) = 0$ (so that Example~\ref{Ex2} does not generalize Example~\ref{Ex1}).

\begin{example}
\label{Ex2}
Let $J_2$ be a job such that $U_{J_2}(t) = 0$ for $0 \leq t \leq K$ and $U_{J_2}(t) = \infty$ for $t > K$. Assume $K>T_{J_2}(\infty)$, $\mathrm{CoI}(T, N) = kTN$ for some $k \in \Real^*_+$, and $T_{J_2}(N) = (T_{J_2}(1) - T_{J_2}(\infty))/N^{\alpha} + T_{J_2}(\infty)$ with $\alpha < 1$ or $T_{J_2}(\infty)>0$. 
\end{example}

	Note, in this example, that $S_{J_2}=\infty$, so increasing $N$ eventually leads to $T_{J_2}(N)\leq K$. Let $N_K = \max \set [N \in \Nat \st T_{J_2}(N) > K] + 1$. That is, $N_K$ is the least number of processors for which the running time of $J_2$ is not strictly above $K$. When comparing two numbers of processors $N_1$ and $N_2$ such that $N_1<N_2$, there are two cases to be considered. The first case is that of $N_1<N_K$, which clearly leads to $\int_{T_{J_2}(N_2)}^{T_{J_2}(N_1)} U_{J_2}(t)\;\mathrm{d}t = \infty$ and to $N_2$ being preferable to $N_1$. The second one is that of $N_1 \geq N_K$, which yields $\int_{T_{J_2}(N_2)}^{T_{J_2}(N_1)} U_{J_2}(t)\;\mathrm{d}t = 0$ and therefore $C_{J_2}(N_2) - C_{J_2}(N_1) \leq \int_{T_{J_2}(N_2)}^{T_{J_2}(N_1)} U_{J_2}(t)\;\mathrm{d}t \iff C_{J_2}(N_2) \leq C_{J_2}(N_1)$. But since
\begin{equation*}
	 	C_{J_2}(N) = \mathrm{CoI}(T_{J_2}(N), N) = kN\left(\frac{T_{J_2}(1) - T_{J_2}(\infty)}{N^{\alpha}} + T_{J_2}(\infty) \right) \text{,}
\end{equation*}
\noindent whose first derivative with respect to $N$ is
\begin{equation*}
	 	C'_{J_2}(N) = k\left((1-\alpha)\frac{T_{J_2}(1) - T_{J_2}(\infty)}{N^{\alpha}} + T_{J_2}(\infty)\right) \geq 0, 
\end{equation*}
we have that $C_{J_2}(N)$ is a nondecreasing function. Therefore, $C_{J_2}(N_2) \leq C_{J_2}(N_1)$ $\iff$ $C_{J_2}(N_2) = C_{J_2}(N_1)$ $\iff$ $\alpha = 1$ and $T_{J_2}(\infty) = 0$, which does not hold and therefore $N_2$ is not preferable to $N_1$. Thus, $\mathcal{N}^*_{J_2} = \set [N \in \Nat \st N \leq N_K]$.

\begin{example}
\label{Ex3}
Let $J_3$ be any job and assume $\mathrm{CoI}(T, N) = K \in \Real^*_+$.
\end{example}

	In this case we have $C_{J_3}(N) = K$ regardless of $N$, so $C_{J_3}(N_2) - C_{J_3}(N_1) \leq \int_{T_{J_3}(N_2)}^{T_{J_3}(N_1)} U_{J_3}(t)\;\mathrm{d}t$ holds for all $N_1 < N_2$ with any $U_{J_3}(t)$. Therefore, $\mathcal{N}^*_{J_3} = \set [N \in \Nat \st N \leq S_{J_3}]$.
	
	Examples~\ref{Ex1}--\ref{Ex3} are all straightforward, with results obtained essentially from common sense. Indeed, Example~\ref{Ex1} illustrates no more than the idealized case in which maximum speedups are achievable, always at the same cost, for as many processors as can be obtained from the infrastructure provider. Example~\ref{Ex2}, being far more realistic, illustrates the commonly accepted situation in which reducing a job's running time requires further expenditure. This example also illustrates the possible existence of a running-time ``threshold'' beyond which no amount of improvement can be accepted. Example~\ref{Ex3}, finally, addresses the (again idealized) infrastructures whose cost does not depend on how much of its resources is used. In this scenario, as in that of Example~\ref{Ex1}, acting greedily is always best. Example~\ref{Ex4}, given next, is substantially more complex.
	
\begin{example}
\label{Ex4}
Let $J_4$ be a job such that $U_{J_4}(t) = a\in \Real^*_+$ for $t \geq 0$. Assume $\mathrm{CoI}(T, N) = kTN$ for some $k \in \Real^*_+$ and $T_{J_4}(N) = (T_{J_4}(1) - T_{J_4}(\infty))/N^{\alpha} + T_{J_4}(\infty)$ with $\alpha < 1$.
\end{example}

	In order to characterize $\mathcal{N}^*_{J_4}$, suppose initially that $\overline{N} \in \mathcal{N}^*_{J_4}$ and consider the least natural number $N>\overline{N}$ such that $N \in \mathcal{N}^*_{J_4}$. Therefore,
\begin{eqnarray*}
	 	\lefteqn{C_{J_4}(N) - C_{J_4}(\overline{N}) \leq \int_{T_{J_4}(N)}^{T_{J_4}(\overline{N})} U_{J_4}(t)\;\mathrm{d}t} \\
	 	&\iff& kNT_{J_4}(N) - k\overline{N}T_{J_4}(\overline{N}) \leq a(T_{J_4}(\overline{N}) - T_{J_4}(N)) \\
	 	&\iff& (kN + a)T_{J_4}(N) \leq (k\overline{N} + a)T_{J_4}(\overline{N}) \\
	 	&\iff& (kN + a) ((T_{J_4}(1) - T_{J_4}(\infty)) N^{-\alpha} + T_{J_4}(\infty)) \leq (k\overline{N} + a) T_{J_4}(\overline{N}) \\
	 	&\iff& (kN + a) (T_{J_4}(1) - T_{J_4}(\infty) + T_{J_4}(\infty)N^{\alpha}) \leq (k\overline{N} + a) T_{J_4}(\overline{N})N^{\alpha}.
\end{eqnarray*}
The latter is an inequality on $N^{\alpha}$, $N$, and $N^{1+\alpha}$, involving additionally an independent term. Although specific cases of this inequality can be solved algebraically (e.g., for $\alpha = 0.5$, let $y = N^{0.5}$ to obtain a cubic inequality on $y$), tackling the general case can be complicated. However, we can show that, in general, $\mathcal{N}^*_{J_4} = \set [ N \in \Nat \st N \leq \overline{N}_{J_4} ]$ for some natural number $\overline{N}_{J_4}$.
		
		Recalling that
\begin{eqnarray*}
	 	 \lefteqn{C_{J_4}(N) - C_{J_4}(\overline{N}) \leq \int_{T_{J_4}(N)}^{T_{J_4}(\overline{N})} U_{J_4}(t)\;\mathrm{d}t} \\
         &\iff& (kN + a)T_{J_4}(N) \leq (k\overline{N} + a)T_{J_4}(\overline{N}),
\end{eqnarray*}
and letting $f(N) = (kN + a)T_{J_4}(N)$, we see that the problem becomes characterizing the values of $\overline{N}$ for which $f(N) \leq f(\overline{N})$. We first differentiate $f(N)$ with respect to $N$, obtaining
\begin{eqnarray*}
	 	 f'(N) = k T_{J_4}(N) + (kN + a)T_{J_4}'(N),
\end{eqnarray*}
then note that
\begin{eqnarray*}
	 	 T_{J_4}(N) = \frac{N T_{J_4}'(N)}{-\alpha} + T_{J_4}(\infty),
\end{eqnarray*}
whence
\begin{eqnarray*}
	 	 f'(N) = k T_{J_4}(\infty) - T_{J_4}'(N) \left( kN \left( \frac{1}{\alpha} - 1  \right) - a \right).
\end{eqnarray*}
Because $T_{J_4}'(N) < 0$, we have $f'(N) > 0$ for $kN \left( \frac{1}{\alpha} - 1  \right) - a > 0$ $\iff$ $N > \frac{a}{k}\left( \frac{\alpha}{1 - \alpha}  \right)$. For $N \leq \frac{a}{k}\left( \frac{\alpha}{1 - \alpha}  \right)$, and using

\begin{eqnarray*}
	 	 T_{J_4}'(N) = \frac{N T_{J_4}''(N)}{-(\alpha+1)},
\end{eqnarray*}
we obtain
\begin{eqnarray*}
	 	 f''(N) & = & - T_{J_4}''(N) \left( kN \left( \frac{1}{\alpha} - 1  \right) - a \right) - T_{J_4}'(N) k \left( \frac{1}{\alpha} - 1  \right) \\
	 	 & = & - T_{J_4}''(N) \left( kN \left( \frac{1}{\alpha} - 1  \right) - a \right) + \frac{N T_{J_4}''(N) }{\alpha+1} k \left( \frac{1}{\alpha} - 1  \right).
\end{eqnarray*}
Since $T_{J_4}''(N) > 0$, we have $f''(N) > 0$. Thus, $f'(N)$ has at most one zero in the interval $[0 \;,\; \frac{a}{k}\left( \frac{\alpha}{1 - \alpha}  \right)]$. Let $N_0$ be such that $f'(N_0) = 0$, if such a zero exists, or $N_0 = 1$, if none exists. Let $N' = \lceil N_0 \rceil - 1$. Then $f'(N) < 0$ for all $0 \leq N \leq N'$ and thus $\set[0,\ldots,N'] \subseteq \mathcal{N}^*_{J_4}$. Moreover, since $f'(N) \geq 0$ for all $N \geq N'+1$, then $N \notin \mathcal{N}^*_{J_4}$ if $N > N'+1$. Finally, $N' + 1 \in \mathcal{N}^*_{J_4} \iff f(N'+1) \leq f(N')$. Therefore, for $\overline{N}_{J_4}$ equal to $N'$ or $N'+1$, it holds that $\mathcal{N}^*_{J_4} = \set [ N \in \Nat \st N \leq \overline{N}_{J_4} ]$.

Examples~\ref{Ex1}--\ref{Ex4} have all addressed the issue of computing the set $\mathcal{N}^*_J$ and, essentially, have all resulted in $\mathcal{N}^*_J = \set [N \in \Nat \st N \leq N^+]$ for some $N^+ \leq S_J$. This holds in spite of the fact that $S_J$, in all cases but that of Example~\ref{Ex3}, lies at infinity. In this regard, the case of Examples~\ref{Ex2} and~\ref{Ex4} is particularly curious, because this somewhat degenerate placement of $S_J$ is due to the power-law-decaying running-time function assumed in those examples. On the other hand, the difference between Examples~\ref{Ex2} and~\ref{Ex4} lies in the functional form assumed for $U_J(t)$ in each case (a step from $0$ to $\infty$ at $t=K$ in the former case, a positive real constant in the latter). We finalize the section with a characterization of the ``least'' $U_J(t)$ for which $\mathcal{N}^*_J = \set [N \in \Nat \st N \leq S_J]$, assuming $\mathrm{CoI}(T, N) = kTN$ for some $k \in \Real^*_+$ and $T_{J}(N) = (T_{J}(1) - T_{J}(\infty))/N^{\alpha} + T_{J}(\infty)$ with $0<\alpha\leq 1$.

	To achieve this, first let $\overline{N} \in \mathcal{N}^*_J$. The least $U_J(t)$ for which using $N>\overline{N}$ processors is preferable to using $\overline{N}$ processors is that for which
\begin{eqnarray*}
	  C_J(N) - C_J(\overline{N}) = \int_{t}^{T_J(\overline{N})} U_J(u)\;\mathrm{d}u,
\end{eqnarray*}
where $t = T_J(N)$. Differentiating both sides of this equation with respect to $t$, and taking into account the fact that $\int_{t}^{T_J(\overline{N})} U_J(u)\;\mathrm{d}u = - \int^t_{T_J(\overline{N})} U_J(u)\;\mathrm{d}u$, leads to
\begin{equation*}
	  \frac{\mathrm{d}C_J(N)}{\mathrm{d}t} = - U_J(t),
\end{equation*}
and thence to
\begin{equation*}
	  U_J(t) = - k\left(\frac{\mathrm{d}N}{\mathrm{d}t}t + N\right). 
\end{equation*}
Using $N(t) = \left(\frac{t - T_{J}(\infty)}{T_{J}(1) - T_{J}(\infty)} \right)^{-1/\alpha}$ yields the desired utility function,
\begin{equation*}
	  U_J(t) = k\left(\frac{t/\alpha}{t-T_J(\infty)}-1\right)N(t).
\end{equation*}

	Letting $T_J(\infty)=0$ in this expression allows us to see its significance more clearly, since it leads to $U_J(t)$ being proportional to $N(t)$, with the proportionality constant depending on $k$ and $\alpha$:
\begin{equation*}
	  U_J(t)=k\left(\frac{1}{\alpha}-1\right)N(t).    
\end{equation*}
That is, the utility function $U_J(t)$ that acts as a ``threshold'' between the rejection and the acceptance of a larger optimal number of processors into $\mathcal{N}^*_J$ is, in the case of $T_J(\infty)=0$, proportional to the function that is inverse to $T_J(N)$.
	
\section{The complexity of automating allocation on the server side}
\label{hardness}

	As far as assigning processors to jobs is concerned, the common practice of infrastructure providers has been to have only the somewhat passive role of simply imposing an upper bound on the maximum number of processors that any particular job is allowed to request. Even though such a degree of passivity does not necessarily imply a poor arrangement between provider and users, it does not imply a good one either. In fact, it seems clear that a more active provider could in principle be able to at least pursue an arrangement leading to the best possible benefits for both providers and users. Doing this would require tackling the following question, which arises when the total demand for computing power exceeds the current capacity of the infrastructure provider: given a set of jobs, how is the available computing power going to be parceled out among them so as to both satisfy the users and maximize revenue? This question is formalized as follows.
	
	\vspace{\baselineskip}
	\textbf{Problem:} \textsc{Allocation Problem (AP)}
    
	\textbf{Input:} $K \in \Real$, the number $\mathit{MAXN}$ of available processors, a set $\mathcal{J}$ of jobs, a running-time function $T_J(N)$ for each $J \in \mathcal{J}$, a utility function $U_J(t)$ for each $J \in \mathcal{J}$, and a cost-of-infrastructure function $\mathrm{CoI}(T, N)$.
    
	\textbf{Question:} Is there $N^\mathrm{opt}_J \in \mathcal{N}^*_J$ for every $J \in \mathcal{J}$ so that $\sum_{J \in \mathcal{J}} C_J(N^\mathrm{opt}_J) \geq K$ and $\sum_{J \in \mathcal{J}} N^\mathrm{opt}_J \leq \mathit{MAXN}$?	
	
    \vspace{\baselineskip}
	We show in this section that AP is equivalent to a generalization of the well-known Knapsack Problem (KP) in which the items to be packed are versioned. KP is one of the classical NP-complete problems~\cite{Garey79}. In its optimization version, KP is stated as follows. Given $W \in \Nat$ and a set of $M$ items, the $i$th one having weight $w_i \in \Nat$ and value $v_i \in \Nat$, determine the set $\mathcal{S} \subseteq \set[1,\ldots, M]$ that maximizes $\sum_{i \in \mathcal{S}} v_i$ while ensuring that $\sum_{i \in \mathcal{S}} w_i \leq W$.
	
	In our versioned-item generalization of KP, the $i$th item exists in $V_i$ versions, each having a distinct weight and value. Deciding which version of each item to select for the knapsack (if any) is part of the problem. In formal terms, the following is our generalization of KP.
	
	\vspace{\baselineskip}
	\textbf{Problem:} \textsc{Knapsack with Versioned Items Problem (KVIP)}
	
	\textbf{Input:} $K' \in \Real$, $W \in \Nat$, and a set of $M$ items, the $i$th one having $V_i$ versions, the $j$th of these having weight $w^j_i \in \Nat$ and value $v^j_i \in \Real$.

	\textbf{Question:} Are there $\mathcal{S} \subseteq \set[1,\ldots, M]$ and $f: \set[1,\ldots, M] \rightarrow \Nat$, with $1\leq f(i)\leq V_i$ for every $i\in \mathcal{S}$, such that $\sum_{i \in \mathcal{S}} v^{f(i)}_i \geq K'$ and $\sum_{i \in \mathcal{S}} w^{f(i)}_i \leq W$?
    
	\vspace{\baselineskip}
	KVIP is clearly an NP-complete problem, since KP trivially reduces to KVIP by letting $V_i = 1$ for $1 \leq i \leq M$. Moreover, the classical pseudo-polynomial-time algorithm for the optimization version of KP, based on dynamic programming and requiring $O(WM)$ time, extends naturally to the optimization version of KVIP. Indeed, it suffices to use the recurrence
\begin{eqnarray*}
\lefteqn{\text{KVIP}(W, K', \mathcal{X}) = \text{\textsc{Yes}}} \\
&\iff& \text{KVIP}(W, K', \mathcal{X} \setminus \set[(w_M^{V_M},v_M^{V_M})]) = \text{\textsc{Yes}} \\ 
&& \text{or } \\
&& \text{KVIP}(W-w_M^{V_M}, K'-v_M^{V_M}, \mathcal{X} \setminus \set[(w_M^{j},v_M^{j}) \mid 1 \leq j \leq V_M]) = \text{\textsc{Yes}} \text{,}
\end{eqnarray*}
where $\mathcal{X} = \set[(w_i^{j_i}, v_i^{j_i}) \mid 1 \leq i \leq M, 1 \leq j_i \leq V_i ]$, with base cases
\begin{gather*}
\text{KVIP}(W, K', \emptyset) = \text{\textsc{Yes}} \iff K' = 0,\\
W < 0 \implies \text{KVIP}(W, K', \mathcal{X}) = \text{\textsc{No}},\\
W \geq 0, K' \leq 0 \implies \text{KVIP}(W, K', \mathcal{X}) = \text{\textsc{Yes}}.
\end{gather*}

Thus, the optimization version of KVIP can be solved in pseudo-polynomial time by $O(W \sum_{i=1}^M V_i)$-time dynamic programming. It can also be used to solve the optimization version of AP by means of the following transformation of an AP instance into a KVIP instance:
\begin{itemize}
\item Let $K'=K$, $W=\mathit{MAXN}$ (the total number of processors), and $M=|\mathcal{J}|$ (the number of jobs). For each job $J_i \in \mathcal{J}$, with $1 \leq i \leq M$, the set of optimal numbers of processors $\mathcal{N}^*_{J_i}$ can be computed via dynamic programming based on the recursion for $N_{J_i}(N_\mathrm{max})$, as mentioned previously. Note that $\mathcal{N}^*_{J_i}$ can be assumed finite in real applications, since the number of processors at any infrastructure can never be arbitrarily large.
\item For $1 \leq i \leq M$, let $V_i=|\mathcal{N}^*_{J_i}|$. For $1 \leq i \leq M$ and $1 \leq j \leq V_i$, let $w_i^j$ be the $j$th smallest member of $\mathcal{N}^*_{J_i}$ and $v_i^j = C_{J_i}(w_i^j)$.
\end{itemize}
It then follows that the resulting instance of KVIP leads to a \textsc{Yes} answer if and only if there exists $N^\mathrm{opt}_J \in \mathcal{N}^*_J$ for every $J \in \mathcal{J}$ such that $\sum_{J \in \mathcal{J}} C_J(N^\mathrm{opt}_J) \geq K$ and $\sum_{J \in \mathcal{J}} N^\mathrm{opt}_J \leq \mathit{MAXN}$, that is, if and only if the instance of AP leads to a \textsc{Yes} answer as well. 

Fully characterizing the aforementioned equivalence between AP and KVIP requires, additionally, that every instance of KVIP be similarly transformable into an instance of AP. This is given next, as part of the NP-completeness proof of AP.

\begin{theorem}
\label{APNPC}
	\textsc{AP} is NP-complete.
\end{theorem}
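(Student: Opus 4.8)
The plan is to dispatch membership in NP quickly and then put the work into NP-hardness, which I would obtain by completing the equivalence announced just above: a polynomial-time reduction from KVIP to AP (KVIP being NP-hard because KP reduces to it). For membership, a proposed assignment $(N^{\mathrm{opt}}_J)_{J\in\mathcal{J}}$ serves as a short certificate --- the two sum conditions are immediate, and each test $N^{\mathrm{opt}}_J\in\mathcal{N}^*_J$ is decided by running the $\Theta(\mathit{MAXN})$-operation recursion for $N_J$ from Section~\ref{deciding}; alternatively, $\mathrm{AP}\in\mathrm{NP}$ follows at once from the AP-to-KVIP transformation just given together with $\mathrm{KVIP}\in\mathrm{NP}$.

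For the reduction, I would start from a KVIP instance and first normalize it without affecting the answer: discard every version dominated by another version of the same item (no smaller value and no larger weight), every version of nonpositive value, and every zero-weight version (the last easily handled separately). Each item $i$ then has versions whose weights $0<w_i^1<\dots<w_i^{V_i}$ and values $0<v_i^1<\dots<v_i^{V_i}$ increase strictly together. Now build an AP instance with $\mathit{MAXN}=W$, one job $J_i$ per item, threshold $K=K'$, and the single cost function $\mathrm{CoI}(T,N)=T\,N^{\gamma}$ for a positive integer $\gamma$ fixed below; this $\mathrm{CoI}$ satisfies the domination property. For job $J_i$ set the saturation point $S_{J_i}=w_i^{V_i}$, fix a strictly increasing piecewise-linear $v^{(i)}$ on $[0,w_i^{V_i}]$ with $v^{(i)}(0)=0$ and $v^{(i)}(w_i^j)=v_i^j$, let $T_{J_i}(N)=v^{(i)}(N)/N^{\gamma}$ for $1\le N\le w_i^{V_i}$ and $T_{J_i}(N)=v_i^{V_i}/(w_i^{V_i})^{\gamma}$ for $N>w_i^{V_i}$, and take $U_{J_i}$ to vanish everywhere except for a narrow pulse of total mass $v_i^j-v_i^{j-1}$ (with $v_i^0:=0$) placed just above each running time $T_{J_i}(w_i^j)$, narrow enough to stay below $T_{J_i}(w_i^j-1)$ (with the conventions $T_{J_i}(0)=\infty$ and $C_{J_i}(0)=0$). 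Choosing $\gamma$ large enough that $v^{(i)}(N+1)/v^{(i)}(N)<(1+1/N)^{\gamma}$ for all relevant $i,N$ makes each $T_{J_i}$ strictly decreasing on $[1,w_i^{V_i}]$ and constant thereafter (so $S_{J_i}$ is indeed its saturation point), while $C_{J_i}(N)=\mathrm{CoI}(T_{J_i}(N),N)=v^{(i)}(N)$, hence $C_{J_i}(w_i^j)=v_i^j$.

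The role of the pulses emerges on unrolling the recursion for $N_{J_i}$: the pulse at $T_{J_i}(w_i^j)$ first falls inside an integration interval exactly on the step from $w_i^{j-1}$ to $w_i^j$ processors, and its mass is precisely the cost increment $v_i^j-v_i^{j-1}$ there, so $w_i^j$ is accepted; every processor count strictly between two consecutive weights is rejected because no further pulse is yet available while $C_{J_i}$ is strictly increasing, and nothing beyond $w_i^{V_i}=S_{J_i}$ is ever accepted. Hence $\mathcal{N}^*_{J_i}=\{0,w_i^1,\dots,w_i^{V_i}\}$ with $C_{J_i}(0)=0$ and $C_{J_i}(w_i^j)=v_i^j$, so an admissible family $N^{\mathrm{opt}}_{J_i}\in\mathcal{N}^*_{J_i}$ with $\sum_i C_{J_i}(N^{\mathrm{opt}}_{J_i})\ge K$ and $\sum_i N^{\mathrm{opt}}_{J_i}\le W$ is, via $\mathcal{S}=\{i:N^{\mathrm{opt}}_{J_i}\neq0\}$ and $f(i)$ given by $w_i^{f(i)}=N^{\mathrm{opt}}_{J_i}$, exactly a \textsc{Yes} for KVIP, and conversely; the transformation is plainly polynomial once the functions are described by these formulas. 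A single-version item is just the reduction from KP, which already gives NP-hardness, so the theorem follows.

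The main obstacle is the interlocking choice of $T_{J_i}$, $U_{J_i}$, and the one shared $\mathrm{CoI}$: $\mathrm{CoI}$ must be job-independent and dominated and each $T_{J_i}$ must be strictly decreasing up to its saturation point, yet each $C_{J_i}$ must rise through the prescribed values $v_i^1<\dots<v_i^{V_i}$ as $N$ grows --- reconciling a running time that must fall in $N$ with a cost that must climb along the Pareto points is what forces the steeply growing factor $N^{\gamma}$ with a sufficiently large $\gamma$. The second delicate point is making $\mathcal{N}^*_{J_i}$ come out to be exactly $\{0\}$ together with the weights: since $\mathcal{N}^*_{J_i}$ is produced by the greedy per-processor recursion and not by a global optimization, the utility function must deliver exactly the right masses at exactly the right running-time thresholds, and checking that no spurious processor count enters $\mathcal{N}^*_{J_i}$ is the part I expect to be most laborious to write out carefully.
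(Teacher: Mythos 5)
Your overall strategy is the paper's: establish NP membership cheaply, then reduce KVIP to AP (with KP giving KVIP's hardness), after the same normalization (strictly increasing weights and values within each item, plus a zero ``do nothing'' option), creating one job per item with $\mathit{MAXN}=W$, $K=K'$, and engineering $T_{J_i}$, $U_{J_i}$ and $\mathrm{CoI}$ so that $\mathcal{N}^*_{J_i}=\{0,w_i^1,\dots,w_i^{V_i}\}$ and $C_{J_i}(w_i^j)=v_i^j$; your narrow ``pulses'' are the paper's Dirac-delta device, and your unrolling of the recursion for $N_{J_i}$ is the same accounting. Where you diverge is in how the prescribed costs are realized: you fix a closed-form, job-independent $\mathrm{CoI}(T,N)=TN^{\gamma}$ and push all the structure into $T_{J_i}(N)=v^{(i)}(N)/N^{\gamma}$, whereas the paper keeps a tame running-time profile $T(N)$ and defines $\mathrm{CoI}$ pointwise over it (with $\epsilon$-perturbations to separate jobs and $0$/$\infty$ padding to preserve the domination property).

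That divergence is where a genuine gap sits. The paper's proof explicitly treats two properties as requirements a job's running-time function must satisfy: it decreases up to the saturation point, and it admits no speedup above $N$, i.e.\ $T_J(N)\geq T_J(1)/N$ (its Condition 1, echoing the discussion after Example~\ref{Ex1}). Your $T_{J_i}(N)=v^{(i)}(N)/N^{\gamma}$ with large $\gamma$ yields speedup $T_{J_i}(1)/T_{J_i}(N)=N^{\gamma}v^{(i)}(1)/v^{(i)}(N)$, which vastly exceeds $N$ (for versions $(1,1)$ and $(10,2)$ the speedup at $N=10$ is about $10^{\gamma}/2$). This is not an incidental blemish but structural to your scheme: if the cap is honored, then $C_{J_i}(N)=T_{J_i}(N)N^{\gamma}\geq C_{J_i}(1)N^{\gamma-1}$, so the cost cannot be steered to arbitrary prescribed values (with $w_i^1=1$ you would need $v_i^j\geq v_i^1 (w_i^j)^{\gamma-1}$), and retreating to $\gamma=1$ also fails, since $v_i^j/w_i^j$ need not be decreasing in $j$ and then $T_{J_i}$ would not decrease. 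As written, you therefore prove hardness only for a relaxed AP in which running-time functions may exhibit superlinear speedup; to get the theorem for the model as the paper construes it, you must give up the single closed-form $\mathrm{CoI}$ and construct it essentially table-wise over a decreasing $T(N)\geq T(1)/N$, which is precisely what the paper's construction does. The remainder of your argument (membership, normalization, pulse placement and the equivalence of the two \textsc{Yes}-conditions) is sound.
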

\begin{proof}
	AP is trivially a member of NP. We argue for NP-completeness by displaying a polynomial-time reduction from an instance $I'$ of KVIP to an instance $I$ of AP.
	
	Note initially that we can safely assume, for $1 \leq i \leq M$, that no $r,s$ exist such that $r\neq s$, $1 \leq r, s \leq V_i$, $w^r_i \leq w^s_i$, and $v^r_i \geq v^s_i$. If this were so, then the version $s$ of item $i$ could be discarded, since replacing it by version $r$ in any valid selection would lead to an equally valid and no worse selection. Therefore, for any $i,j$ such that $1 \leq i \leq M$ and $1 \leq j < V_i$, we assume $w^j_i < w^{j+1}_i$ and $v^j_i < v^{j+1}_i$. It will also be convenient to assume that each item has a degenerate version (say, the first one) having no weight or value, that is, $(w_i^1, v_i^1) = (0, 0)$ for all $1 \leq i \leq M$. 

	The reduction to $I$ from $I'$ is as follows (the reader may find it useful to check the reduction example depicted in Figure~\ref{fig:ExemploReducao} while following the proof; the details of instance $I'$ appear in Figure~\ref{fig:ExemploReducao}(a)). Let $K=K'$ and $\mathit{MAXN}=W$. For all $1 \leq i \leq M$, let $J_i$ be a job of $\mathcal{J}$, with saturation point $S_{J_i} = S = \max \set [ w^j_i \st 1 \leq i \leq M, 1 \leq j \leq V_i]$. Let $c^N_i = C_{J_i}(N) = v^j_i$, if $N = w^j_i$ for some $1 \leq j \leq V_i$, or $c^N_i = C_{J_i}(N) = \infty$, otherwise (Figure~\ref{fig:ExemploReducao}(b)). This choice of a job's cost function is supported by appropriate running-time, cost-of-infrastructure, and utility functions, all given next.
	
	Let $T: \Nat^* \rightarrow \Real^*_+$ be any decreasing function such that $T(N) \geq T(1)/N$ (Condition 1), cf.\  Figure~\ref{fig:ExemploReducao}(c). We will use $T(N)$ as an auxiliary function in order to build $T_{J_i}(N)$ for $1 \leq i \leq M$. Condition 1 and the fact that the function is decreasing guarantee that $T(N)$ is consistent with the two requirements that a job's running-time function must comply with, viz., that using more processors (up to the job's saturation point) makes the job run faster and that no speedup above $N$ is attainable. 
	
	In order to build each job's running-time function, let $\epsilon$ denote the smallest difference between any two distinct elements of the set $\set [ T(1),\ldots,T(S) ]$, that is, let $\epsilon = \min \set [ |T(N) - T(N-1)| \st 1 < N \leq S ]$. For $1 \leq N \leq S$, let $\prec_N$ be a linear order on $\set [1, \ldots, M]$ such that $c^{N}_i < c^{N}_{i'} \implies i \prec_N i'$. Let $T_{J_i}(N)$ be within the interval $(T(N), T(N) + \epsilon)$ such that $T_{J_i}(N) < T_{J_{i'}}(N) \iff i \prec_N i'$ (Condition 2). Let $T_{J_i}(w^j_i)$ be denoted by $t^j_i$.
	
	Let $\mathrm{CoI}(T, N)$ be such that: (i) $\mathrm{CoI}(t^j_i, w^j_i) = v^j_i$ for $1 \leq i \leq M, 1 \leq j \leq V_i$; (ii) if $i' \prec_N i''$, then for $T_{J_{i'}}(N) < t < T_{J_{i''}}(N)$, $\mathrm{CoI}(t, N) = c^{N}_{i''}$; (iii) for all points $(T, N)$ for which $\mathrm{CoI}$ remains undefined, let $\mathrm{CoI}(T, N) = 0$ if there exist $i, j$ such that $T \leq t^{j}_{i}$ and $N \leq w^j_i$, or let $\mathrm{CoI}(T, N) = \infty$, otherwise (Figure~\ref{fig:ExemploReducao}(d)). Note that, by Condition 2, $\mathrm{CoI}(T, N)$ is a function defined over all points. Moreover, the domination property, given in Section~\ref{intro}, holds by construction. 
	
	Now we choose $U_{J_i}(t)$ so that $\set[w^j_i \st 1 \leq j \leq V_i] = \mathcal{N}^*_{J_i}$ for all $1 \leq i \leq M$. We do so by letting $U_{J_i}(t) = \sum_{j = 1}^{V_i} \delta(t - t^j_i) C_{J_{i}}(w^{j}_i)$, where $\delta(x)$ is the Dirac delta function, that is, $\int_{x_1}^{x_2}\delta(y)\mathrm{d}y=1$ if and only if $x_1\leq 0$ and $x_2 \geq 0$ (the integral equals $0$, otherwise).
	
	By definition, AP returns a \textsc{Yes} answer on input $I$ if and only if there exists $N^\mathrm{opt}_{J_i} \in \mathcal{N}^*_{J_i}$ for every $1 \leq i \leq M$ so that $\sum_{1 \leq i \leq M} C_{J_i}(N^\mathrm{opt}_{J_i}) \geq K$ and $\sum_{1 \leq i \leq M} N^\mathrm{opt}_{J_i} \leq \mathit{MAXN}$. And since the members of $\mathcal{N}^*_{J_i}$ correspond to those of $\set[w^j_i \st 1 \leq j \leq V_i]$ (with the $0$ in the former set corresponding to the degenerate version in the latter), and moreover $C_{J_i}(w^j_i) = v^j_i$ and $\mathit{MAXN}=W$, readily we have that AP returns a \textsc{Yes} answer on input $I$ if and only if KVIP returns a \textsc{Yes} answer on input $I'$.
\end{proof}

\begin{figure}[!htbp]
\centering
\includegraphics[scale=0.37]{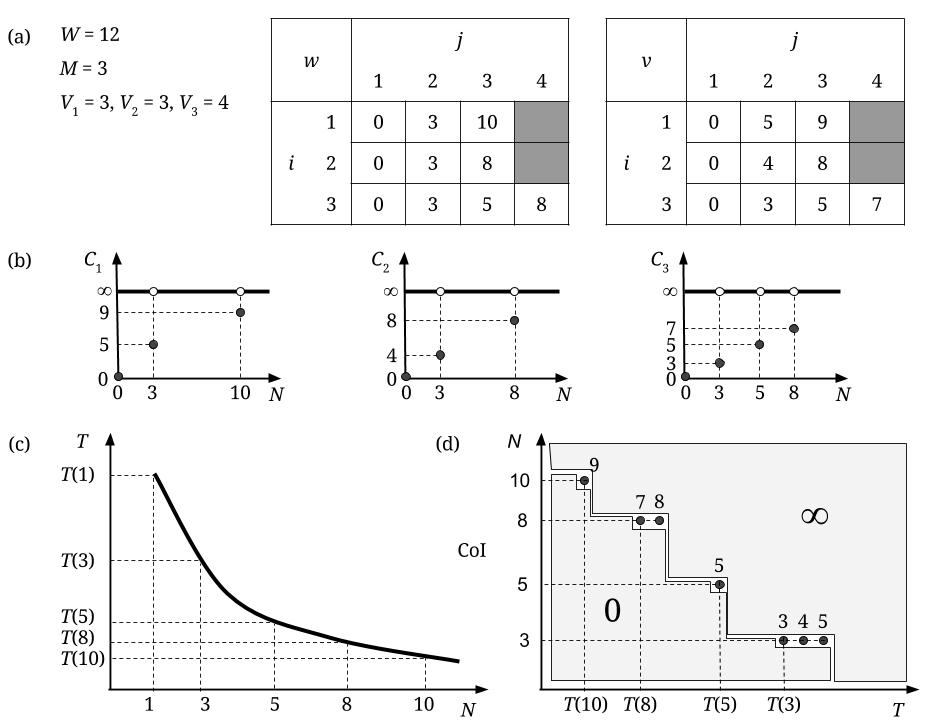}
\caption{Example of the reduction described in the proof of Theorem~\ref{APNPC}. Given an instance of KVIP (a), comprising the knapsack's capacity $W$, as well as the weight $w$ and value $v$ of each of the $M$ candidate items' versions, an instance of AP is constructed (b--d). This construction uses $\mathit{MAXN}=W$ and lets each item be a job. An item's versions are used to establish the corresponding job's utility function and thus its set $\mathcal{N}^*$ of optimal numbers of processors. For any given job $J_i$, this results in $N\in\mathcal{N}^*_{J_i}$ if and only if $N=w_i^j$ for some $1 \leq j \leq V_i$. Given an appropriate cost-of-infrastructure function (see below), the cost function that results for each job is as follows (b). The cost of executing job $J_i$ on $N$ processors, $C_i(N)=C_{J_i}(N)$, is finite only for $N\in\mathcal{N}^*_{J_i}$. When this is the case, we have $C_i(N)=C_i(w_i^j)=v_i^j$, where $j$ is the relevant version of item $i$. Given any decreasing function $T$ of $N$ such that $T(N) \geq T(1)/N$ (c), the cost-of-infrastructure function used to yield such job costs is the $\mathrm{CoI}(T,N)$ function depicted in (d). Specifically, we let $C_i(N)=\mathrm{CoI}(T_{J_i}(N),N)$ for each $N\in\mathcal{N}^*_{J_i}$, where $T_{J_i}(N)$ is derived from $T(N)$ in such a way as to yield different values for any two jobs and the same value of $N$. Take, for example, jobs $J_2$ and $J_3$, and note that $N=8$ is a member of both $\mathcal{N}^*_{J_2}$ and $\mathcal{N}^*_{J_3}$. In order to ensure the possibility of $C_2(8)\neq C_3(8)$, as in (b), it suffices that we require $T_{J_2}(8)\neq T_{J_3}(8)$, which in the example has been achieved by letting $T_{J_3}(8)=T(8)$ and $T_{J_2}(8)<T(8)+\epsilon$, with $\epsilon$ as in the theorem's proof. In general, jobs are considered in increasing cost order for fixed $N$ when effecting these deviations from $T(N)$.}
\label{fig:ExemploReducao}
\end{figure}

\section{Final remarks}
\label{conclusion}
	
	We have focused on the issue of determining the actual number of processors to be assigned to a program for distributed computation. This number is traditionally provided by users as an input parameter, even though as we have argued, its optimal determination can be rather involved. The current state of affairs just pushes the burden of such a decision towards the users.
    
	As an alternative, we have proposed that a server-side system should exist whose task would be to handle program allocation. Such a system would measure the load of the infrastructure and update the $\mathrm{CoI}$ function accordingly. For instance, the target could be to keep the infrastructure's load factor at some preestablished value (say, 90\% of all processors in the busy state, on average). Should the actual load factor fall below this threshold, costs would be decreased; they would be increased if the load factor grew above the threshold. The system should periodically solve the optimization version of KVIP, as described in Section~\ref{hardness}, in order to find the optimal solution to the optimization version of AP. All this would be based on known characterizations of the utility and running-time functions involved. Obtaining the latter, as we have noted, is not the goal of traditional analyses of algorithms, which express running-time functions in the big-oh notation and as such ignore constant factors, but rather should be approximated through some other form of algorithm analysis, one whose results were expressed in the more appropriate tilde notation.     

\subsection*{Acknowledgments}

The authors wish to thank CAPES, CNPq, FAPERJ, and a FAPERJ BBP grant.

\bibliography{mediation}
\bibliographystyle{unsrt}

\end{document}